\definecolor{egreen}{rgb}{0, 0.4, 0.267}
\definecolor{dkviolet}{rgb}{0.6,0,0.8}
\definecolor{dkgreen}{rgb}{0,0.4,0}
\definecolor{dkblue}{rgb}{0,0.1,0.5}
\definecolor{lightblue}{rgb}{0,0.5,0.5}
\definecolor{orange}{rgb}{0.9,0.39,0}
\definecolor{lightgrey}{RGB}{240,240,240}
\newcommand{\namefont}[1]{\textsf{#1}}
\newcommand{\Coq}{\namefont{Coq}\xspace}
\newcommand{\MC}{\namefont{Mathematical Components}\xspace}
\newcommand{\ie}{i.e.}
\newcommand{\eg}{e.g.}
\newcommand{\N}{\mathbb N}
\newcommand{\Cat}{\mathcal C}
\newcommand{\G}{\mathcal G}
\newcommand{\A}{\mathcal A}
\newcommand{\Q}{\mathcal Q}
\newcommand{\BP}{\mathcal{BP}}
\newcommand{\E}{\mathcal E}
\newcommand{\PQ}{\mathrm{PQ}}
\newcommand{\card}[1]{\mathrm{card}(#1)}
\newcommand{\restr}{\mathrm{restr}}
\newcommand{\com}{\mathrm{commute}}
\newcommand{\id}{\mathrm{id}}
\newcommand{\sot}{\textsf{st}}
\newcommand{\rel}[1]{\sim}
\newcommand{\eqd}{\approx}
\newcommand{\dual}{\dagger}
\newcommand{\Comp}{\mathrm{Comp}}
\newcommand{\cl}{c\ell} 
\newcommand{\concat}{\!\cdot\!} 
\newcommand{\acc}{\mathrm{acc}} 
\newcommand{\arcsfrom}[1]{\A_{#1}} 
\newcommand{\arcsto}[1]{\;\overline{\!\!\A}_{#1}} 
\newcommand{\Mono}{\mathrm{Mono}}
\newcommand{\rawvertex}[1]{\tikz{\fill[#1] (0,0) circle (2pt);}}
\NewDocumentCommand{\quiver}{ O{1} m O{0} m O{0} }{
  \tikz[baseline={([yshift=\ifblank{#1}{-.6ex}{0ex}]current bounding box.center)}, inner sep=\ifblank{#1}{1pt}{.5pt}, -{Latex[scale=\ifblank{#1}{1}{.5}]}, scale=\ifblank{#1}{1}{.5}]{
    \bfseries
    \foreach \p [count = \j, evaluate={\i=int(\j-1)}, evaluate={\k=int(100-(\i<#3)*70)}] in {#2} {
      \node[black!\k] (\i) at \p {.};
    }
    \foreach \i/\j [count = \l, evaluate={\k=int(100-(\l<=#5)*70)}] in {#4} {
      \draw[black!\k] (\i)--(\j);
    }
  }
}
\NewDocumentCommand{\quiverr}{ O{1} m O{0} m O{0} }{
  \tikz[baseline={([yshift=\ifblank{#1}{-.5ex}{0ex}]current bounding box.center)}, inner sep=\ifblank{#1}{1pt}{.5pt}, -{Latex[scale=\ifblank{#1}{1}{.5}]}, scale=\ifblank{#1}{1}{.5}]{
    \bfseries
    \foreach \p [count = \j, evaluate={\i=int(\j-1)}, evaluate={\k=int(100-(\i<#3)*70)}] in {#2} {
      \node[black!\k] (\i) at \p {.};
    }
    \foreach \i/\j/\s [count = \l, evaluate={\k=int(100-(\l<=#5)*70)}] in {#4} {
      \draw[black!\k] (\i)to[bend left=\s](\j);
    }
  }
}
\newcommand{\Qmap}[1][1]{{\quiver[#1]{(0,0),(.8,0)}{0/1}}}
\newcommand{\Qcobimap}[1][1]{{\quiverr[#1]{(0,0),(1,0)}{0/1/30,0/1/-30}}}
\newcommand{\Qcomp}[1][1]{{\quiver[#1]{(0,0),(.5,-.5),(1,0)}{0/1,0/2,1/2}}}
\newcommand{\Qmono}[1][1]{{\quiverr[#1]{(0,0),(.8,0),(1.6,0)}{0/1/20,0/1/-20,0/2/35,1/2/0}}}
\newcommand{\mapicomp}{{\quiver{(1,0),(.5,-.5),(0,0)}[1]{1/0,2/0,2/1}[2]}}
\newcommand{\mapiocomp}{{\quiver{(.5,-.5),(0,-0),(1,0)}[1]{1/0,0/2,1/2}[2]}}
\newcommand{\mapiicomp}{{\quiver{(0,0),(.5,-.5),(1,0)}[1]{0/1,0/2,1/2}[2]}}
\newcommand{\mapioomono}  {{\quiverr{(0,0),(.8,0),(1.6,0)}[1]{0/1/20,0/1/-20,0/2/35,1/2/0}[3]}}
\newcommand{\compimono}   {{\quiverr{(0,0),(.8,0),(1.6,0)}{0/1/-20,0/1/20,0/2/35,1/2/0}[1]}}
\newcommand{\compiomono}  {{\quiverr{(0,0),(.8,0),(1.6,0)}{0/1/20,0/1/-20,0/2/35,1/2/0}[1]}}
\newcommand{\cobimapimono}{{\quiverr{(1.6,0),(.8,0),(0,0)}[1]{2/0/35,1/0/0,2/1/20,2/1/-20}[2]}}
\newcommand{\coq}[1]{\mintinline{coq}`#1`}
\newcommand{\coqescape}[1]{\mintinline[mathescape=true,escapeinside=//]{coq}`#1`}
\title{Machine-Checked Categorical Diagrammatic Reasoning} 
\author{Benoît Guillemet}{École normale supérieure Paris-Saclay, France}{}{}{}
\author{Assia Mahboubi}{Nantes Université, École Centrale Nantes, CNRS, INRIA, LS2N, UMR 6004, France \and Vrije Universiteit Amsterdam, the Netherlands}{}{0000-0002-0312-5461}{} 
\author{Matthieu Piquerez}{Nantes Université, École Centrale Nantes, CNRS, INRIA, LS2N, UMR 6004, France}{}{0009-0002-1126-4725}{}
\authorrunning{B. Guillemet, A. Mahboubi, M. Piquerez} 
\keywords{Interactive theorem proving,
Categories, Diagrams,
Formal proof automation} 
\begin{document}

\maketitle

\begin{abstract}
This paper describes a formal proof library, developed using the Coq proof assistant, designed to assist users in writing correct diagrammatic proofs, for 1-categories.  This library proposes a deep-embedded, domain-specific formal language, which features dedicated proof commands to automate the synthesis, and the verification, of the technical parts often eluded in the literature.
\end{abstract}

\section{Introduction}\label{sec:intro}

\emph{Abstract nonsense}, a non derogatory expression attributed to Steenrod, usually refers to the incursion of categorical methods for a proof step deemed both technical and little informative, and therefore often succinctly described. \emph{Diagrams} are typically drawn in this case, so as to guide the intuition of the audience, and help visualize the existence of certain morphisms or objects, identities between composition of morphisms, etc.

Formally, a categorical diagram is a functor $F\colon J \rightarrow \mathcal{C}$, with $J$ a small category called the \emph{shape} of the diagram~\cite{zbMATH01324388}. Diagrams are depicted as directed multi-graphs, also called \emph{quivers}, whose vertices are decorated with the objects of $\mathcal{C}$ and whose arrows each represent a certain morphism, between the objects respectively decorating its source and its target. A directed path in the diagram is hence associated with a chain of composable arrows and a diagram \emph{commutes} when all directed paths with same source and target lead to equal compositions. Equalities between compositions of morphisms thus correspond to the commutativity of certain sub-diagrams of a larger diagram. Chasing commutative sub-diagrams in a larger diagram provides an elegant alternative to equational reasoning, when the latter becomes overly technical. Diagrams actually play a central role in category theory, for they provide such an efficient way of delivering convincing enough proofs. Some classical textbooks introduce diagrams as early as in their introduction chapter~\cite{zbMATH01216133}, while others devote an entire section to diagrammatic categorical reasoning~\cite[Section 1.6]{riehl}~\cite[Session 17]{zbMATH05593534}. The following diagrammatic proof of Lemma~\ref{lem:monomonom} provides a toy illustrative example of this technique.

\begin{lemma}\label{lem:monomonom}
  Let $\Cat$ be a category. For any morphism $f$ and $g$ such that $g \circ f \in Hom(\Cat)$, if $g \circ f$ is a monomorphism, then so is $f$.
\end{lemma}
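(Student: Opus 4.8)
The plan is to unfold the definition of monomorphism and reduce the statement to the left-cancellation property of $g \circ f$ together with associativity of composition. Recall that a morphism $m$ is a monomorphism exactly when it is left-cancellable: for every pair of morphisms $h_1, h_2$ sharing a source and with common target the source of $m$, the equality $m \circ h_1 = m \circ h_2$ forces $h_1 = h_2$. So, writing $A$ for the source of $f$, I would fix an object $X$ and two morphisms $h_1, h_2 \colon X \to A$ with $f \circ h_1 = f \circ h_2$, and aim to prove $h_1 = h_2$.

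The key step is to postcompose both sides of $f \circ h_1 = f \circ h_2$ with $g$ on the left, obtaining $g \circ (f \circ h_1) = g \circ (f \circ h_2)$, and then to rewrite each side by associativity into $(g \circ f) \circ h_1$ and $(g \circ f) \circ h_2$. Drawn as a diagram, this amounts to gluing the parallel pair $h_1, h_2$ underneath the composable chain $f$ then $g$: the two resulting paths from $X$ to the target of $g$ agree, which is precisely the hypothesis of the monomorphism property for $g \circ f$. Applying that hypothesis yields $h_1 = h_2$, which is what we wanted; hence $f$ is a monomorphism.

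I do not anticipate a genuine mathematical obstacle — the whole argument is associativity plus one application of left-cancellativity. The only thing to be careful about is the bookkeeping of sources and targets so that every composite appearing in the diagram is well-typed, and applying the associativity rewrite in the correct direction; in a formal setting these are exactly the ``technical parts often eluded in the literature'' that the library described here is designed to synthesise and check automatically.
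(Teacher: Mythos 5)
Your proof is correct and mathematically identical to the paper's: fix a parallel pair into the source of $f$ that $f$ equalises, postcompose with $g$, rewrite by associativity, and invoke left-cancellativity of $g \circ f$. The paper merely phrases the same three steps diagrammatically (introducing an auxiliary arrow $j$ for the common value $f \circ h_1 = f \circ h_2$ and chasing commuting triangles in Figure~\ref{fig:monomonom}), since the lemma is there precisely to illustrate the diagrammatic style the library supports.
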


\begin{proof}
  Consider $f : A \rightarrow B$, $g : B \rightarrow C$ and $g \circ f : A \rightarrowtail C$ morphisms in a category $\Cat$. Here is a very detailed diagrammatic proof, taking place in the diagram of Figure~\ref{fig:monomonom}.

  \begin{figure}[h!]
    \begin{center}
      \begin{tikzcd}
      & & B \arrow[rd, "g"] & \\
      D \arrow[r, "k"', bend right] \arrow[r, "h", bend left] \arrow[rru, "j", bend left] \arrow[rrr, "g \circ j"', bend right=49] & A \arrow[ru, "f"] \arrow[rr, "g \circ f", tail] &                   & C
      \end{tikzcd}
    \end{center}
    \caption{Diagrammatic proof that if $g \circ f$ is a monomorphism, then so is $f$}\label{fig:monomonom}
  \end{figure}
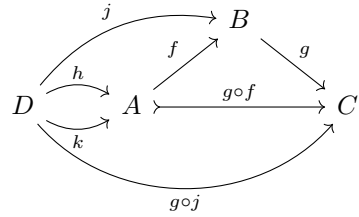

  We need to prove that for any two other morphisms $h,k : D \rightrightarrows A$ in $Hom(\Cat)$ such that $f \circ h = f \circ k$, we have $h = k$, i.e., that the two-arrows diagram $h, k$ commutes. By hypothesis on $h$ and $k$, there is an arrow $j$ such that the triangle diagrams respectively formed by arrows $h,f, j$ and $k, f, j$ both commute. By definition of composition, the triangle diagrams formed by arrows $f, g, g \circ f$ and $j, g, g \circ j$ also commute. As a consequence, both triangle diagrams formed by $h, g \circ f, g \circ j$ and $k, g \circ f, g \circ j$ commute. The conclusion follows because $g \circ f$ is a monomorphism.
\end{proof}

\emph{Diagram chasing} actually refers to a central technique to homological algebra~\cite{zbMATH00758278,zbMATH01324388}, operating on diagrams over abelian categories and used for proving the existence, injectivity, surjectivity of certain morphisms, the exactness of some sequences, etc. The \emph{five lemma} or the \emph{snake lemma} are typical examples of results proved by diagram chasing~\cite{zbMATH00758278}. However, complex diagram chases (see, e.g., \cite[p.338]{piquerez:tel-03499730}) only remain readable at the price of hiding non-trivial technical arguments and are, in practice, challenging to rigorously verify by hand. Typically, the reader of a diagrammatic proof is asked to solve instances of variable difficulty of a decision problem hereafter referred to as the \emph{commerge} problem: \emph{Given a collection of sub-diagrams of a larger diagram which commute, must the entire diagram commute?} In addition, proofs may resort to non-trivial duality arguments, in which the reader has to believe that a property about diagrams in any Abelian category remains true after reversing all the involved arrows, although the replay of a given proof \emph{mutatis mutandis} cannot be fulfilled in general.

The long-term objective of the present work is thus to build a computer-aided instrument for devising both fluent and reliable categorical diagrammatic reasoning, for 1-categories. The present article describes the implementation of the core of such a tool, as a library for the \Coq{} proof assistant~\cite{the_coq_development_team_2023_8161141}. The design of this library follows two main design principles. First, it aims at being independent from any specific library of formalized category theory or abstract algebra, but rather usable as a helper for any existing one. Second, it should feature enough automation tools for synthesizing the bureaucratic parts of proofs \enquote{by abstract nonsense}, and formal proofs thereof. The main contributions presented here are thus:
\begin{itemize}
  \item a deep-embedded first-order language for category theory, geared towards diagrammatic proofs, together with a generic formal definition of categorical diagrams, both based on a previous work by two of the authors~\cite{DBLP:conf/csl/MahboubiP24};
  \item automation support for proofs by duality in the corresponding reified proof system;
  \item automation support for the \emph{commerge} problem.
\end{itemize}
The corresponding code is available at the following url~\cite{repo}. The rest of the article is organized as follows. Section~\ref{sec:prelim} fixes some vocabulary and describes the corresponding formal definitions. Section~\ref{sec:dsl} describes the deep-embedded formalization of the first-order language, and of the related reified proof system. Section~\ref{sec:autocom} explains the algorithms involved in automating commutativity proofs. We conclude in Section~\ref{sec:concl} by discussing related work and a few perspectives.

\section{Preliminaries}\label{sec:prelim}

This section fixes some definitions and notations, and introduces their formalized counterpart when relevant. Some of them coincide with the preliminaries of our previous text~\cite{DBLP:conf/csl/MahboubiP24}, for the purpose of being self-contained. The description of their formalized counterpart is novel. We do not display implicit arguments in \Coq{} terms. Throughout this article, we use the word \emph{category} for 1-categories.

In all what follows, $\N \coloneqq \{0, 1, \dots \}$ refers to the set of non-negative integers, represented in \Coq{} by the type \coqescape{nat}, from its standard library. We also use the standard polymorphic type \coqescape{list} for finite sequences, equipped with the library on sequences distributed by the \MC{}~\cite{assia_mahboubi_2021_4457887} library. Some names thus slightly differ from those present in the standard library. For instance, the size $|l|$ of a finite sequence $l$, formalized as \coqescape{l : list T}, is \coqescape{size l}, instead of the standard \coqescape{length l}. We document in comments the definitions we use from this library when their names are not self-explanatory. We recall that \coqescape{b1 && b2} is the standard notation for the (boolean) conjunction of two boolean values \coqescape{b1 b2 : bool}. If $k\in\N$, then $[k]$ denotes the finite collection $\{0,\dots,k-1\}$, which is implemented by the sequence \coqescape{iota 0 n : list nat}.

\begin{definition}[General quiver, dual]\label{def:quiv}
  A \emph{general quiver} $\Q$ is a quadruple $(V_\Q, A_\Q, s_\Q\colon A_\Q \to V_\Q, t_\Q\colon A_\Q \to V_\Q)$ where $V_\Q$ and $A_\Q$ are two sets. The element of\/ $V_\Q$ are called the \emph{vertices} of $\Q$ and the element of $A_\Q$ are called \emph{arrows}. If $a \in A_\Q$, $s_\Q(a)$ is called the \emph{source} of $a$ and $t_\Q(a)$ is called its \emph{target}. The \emph{dual} of a quiver $\Q$ is the quiver $\Q^\dual \coloneqq (V_\Q,A_\Q,t_\Q,s_\Q)$, which swaps the source and the target maps of $\Q$.
\end{definition}

From now on, we casually call \emph{quivers} the special case of quivers with $V_\Q$ a finite subset of $\N$. The formal definition moreover assumes that vertices are labelled in order:
\begin{minted}{coq}
  (* Data of a quiver *)
  Record quiver : Type := quiver_Build {
    quiver_nb_vertex : nat;  (* number of vertices *)
    quiver_arc : list (nat * nat); (* sequence of arrows *) }.

  (* Well-formedness condition for quivers :
    all arrows involved in A have a source and target in bound *)
  Definition quiver_wf '(quiver_Build n A) : bool :=
    all (fun a => (a.1 < n) && (a.2 < n)) A.

  (* The dual quiver of a quiver, with same vertices and reversed arrows *)
  Definition quiver_dual '(quiver_Build n A) : quiver :=
    quiver_Build n (map (fun a => (a.2,a.1)) A).
\end{minted}
An arrow of a formalized quiver \coqescape{q : quiver} is thus given by an element in the sequence \coqescape{quiver_arc q}, itself a pair of integers giving its source and target respectively. Note that the index in the sequence matters, as the sequence may have duplicate. A formalized quiver is well-formed when the sources and targets of its arrows are in bound.

For the sake of readability, we use drawings to describe some quivers, as for instance:
\begin{figure}[h]
\begin{center}
\begin{tikzpicture}[inner sep=1pt, -latex,scale=.7]
  \node (0) at (0,0) {\textbf{.}};
  \node (1) at (1,0) {\textbf{.}};
  \node (2) at (2,0) {\textbf{.}};
  \scriptsize
  \draw (0) -- (1) node[midway, above] {};
  \draw (0) to[bend right=35] node[midway, above] {} (2);
  \draw (1) to[bend left=30] node[midway, above] {} (2);
  \draw (1) to[bend right=20] node[midway, above] {} (2);
\end{tikzpicture}
\end{center}
\end{figure}

For a quiver $\Q$ denoted by such a drawing, the convention is that $V_\Q = [\card{V_\Q}]$ and $A_\Q = [\card{A_\Q}]$, where $\card A$ denotes the cardinal of a finite set~$A$. From left to right, the drawn vertices correspond to $0, 1, \dots, \card{V_\Q}-1$. Arrows are then numbered by sorting pairs $(s_\Q, t_\Q)$ in increasing lexicographical order, as in:
\begin{figure}[h]
\begin{center}
\begin{tikzpicture}[inner sep=1.5pt, -latex,scale=.9]
  \node (0) at (0,0) {\textbf{.}};
  \node (1) at (1,0) {\textbf{.}};
  \node (2) at (2,0) {\textbf{.}};
  \scriptsize
  \draw (0) node[above] {0};
  \draw (1) node[above] {1};
  \draw (2) node[above] {2};
  \draw (0) -- (1) node[midway, above] {0};
  \draw (0) to[bend right=35] node[midway, above=-.5pt] {1} (2);
  \draw (1) to[bend left=35] node[midway, above] {2} (2);
  \draw (1) to[bend right=20] node[midway, above=-.45pt] {3} (2);
  \end{tikzpicture}
\end{center}
\end{figure}

\begin{definition}[Morphism, embedding, restriction]
  A \emph{morphism of quivers} $m\colon \Q \to \Q'$, is the data of two maps $m_V\colon V_\Q \to V_{\Q'}$ and $m_A\colon A_\Q \to A_{\Q'}$ such that $m_V \circ s_\Q = s_{\Q'} \circ m_A$ and $m_V \circ t_\Q = t_{\Q'} \circ m_A$. Such a morphism is called an \emph{embedding of quivers} if moreover both $m_V$ and $m_A$ are injective. In this case we write $m\colon \Q \hookrightarrow \Q'$.
\end{definition}

We also use drawings to denote embeddings. The black part represents the domain of the morphism, the union of black and gray parts represents its codomain. Here is an example of an embedding of the quiver $\Qcobimap[]$ into the quiver drawn above.
\begin{center}
  \begin{tikzpicture}[inner sep=1pt, -latex,scale=.7]
    \node[black!30] (0) at (0,0) {\textbf{.}};
    \node (1) at (1,0) {\textbf{.}};
    \node (2) at (2,0) {\textbf{.}};
    \scriptsize
    \draw[black!30] (0);
    \draw (1);
    \draw (2);
    \draw[black!30] (0) -- (1);
    \draw[black!30] (0) to[bend right=35] (2);
    \draw (1) to[bend left=30] (2);
    \draw (1) to[bend right=20] (2);
  \end{tikzpicture}
\end{center}

For the purpose of this work, we actually only need to define formally embedding morphisms, called sub-quivers, which select the relevant vertices and arrows from a quiver:
\begin{minted}{coq}
  Record subquiver := subquiver_Build {
    subquiver_vertex : list nat; (* labels of the selected vertices *)
    subquiver_arc : list nat; (* indices of the selected arrows *) }.

  (* Performs the expected selection of vertices and arrows *)
  Definition quiver_restr '(subquiver_Build sV sA) : quiver -> quiver := (...)
\end{minted}

Here as well, restrictions of quiver only make sense under well-formedness conditions:
\begin{minted}{coq}
  (* Indices of the arrows to be selected are in bound *)
  Definition quiver_restr_A_wf sA '(quiver_Build n A) : bool :=
      all (gtn (size A)) sA.

  Definition quiver_restr_V_wf sV '(quiver_Build n A) : bool :=
    uniq sV && (* sV is duplicate-free *)
    all (gtn n) sV && (* all elements of sV are smaller than n *)
    all (fun a => (a.1 \in sV) && (a.2 \in sV)) A. (* any vertex involved in A is in sV*)

  (* Well-formed condition on the restriction of a quiver *)
  Definition quiver_restr_wf '(subquiver_Build sV sA) Q :=
    quiver_restr_A_wf sA Q && quiver_restr_V_wf sV (quiver_restr_A sA Q).
\end{minted}

\begin{definition}[Path-quiver]\label{def:pathquiv}
  The \emph{path-quiver of length $k$}, denoted by $\PQ_k$, is the quiver
  with $k+1$ vertices and $k$ arrows $([k+1], [k], \id, (i \mapsto i + 1))$.
\end{definition}

A path-quiver can be drawn as:
\begin{center} \tikz[baseline={([yshift=-.5ex]current bounding box.center)}, inner sep=1pt, -latex,scale=.7]{
  \bfseries \node (0) at (0,0) {.}; \node (1) at (1,0) {.}; \node (2)
  at (2,0) {.}; \node (3) at (3.6,0) {.}; \node at (2.8,0)
  {\normalfont $\dots$}; \draw (0) -- (1); \draw (1) -- (2); \draw
  (3.2,0) -- (3); \draw[-] (2) -- (2.4,0); }
\end{center}
with at least one vertex. Such a path-quiver is called \emph{nontrivial} if it has at least two vertices.

If $\Q$ is a general quiver, a morphism of the form $p\colon \PQ_k \to \Q$, for some $k$, is called a \emph{path of\/ $\Q$ from $u$ to $v$ of length $k$}, where $u \coloneqq p(0)$ and $v \coloneqq p(k)$. A general quiver is \emph{acyclic} if any path of this quiver is an embedding.

If $P$ is a nontrivial path-quiver, we define $\sot_P\colon \quiver[]{(0,0),(.2,0)}{} \hookrightarrow P$ to be the embedding mapping the first vertex on the leftmost vertex of $P$ and the second vertex on the rightmost vertex of $P$. Two paths $p_1\colon P_1 \to \Q$, $p_2\colon P_2 \to \Q$ of $\Q$ \emph{have the same extremities} if $p_1 \circ \sot_{P_1} = p_2 \circ \sot_{P_2}$. We denote by $\BP_{\!\Q}$ the set of pairs of paths of $\Q$ having the same extremities. Such a pair is called a \emph{bipath}.

Paths in a formal quivers are defined a ternary relation between two vertices and a sequence of arrows:
\begin{minted}{coq}
  (* Operations on lists:
    - (_ == _) is a generic boolean comparison test, in this case for lists of integers
    - rcons l x  is the list l followed by x
    - unzip[1 | 2] l is the list of fst (resp snd) elements of the list of pairs l
    - sub p A is the list of elements of A with index in p, in order *)
  Definition path (A : seq (nat * nat)) (u : nat) (p : seq nat) (v: nat) : bool :=
    (* all elements in p are in bound *)
    all (gtn (size A)) p &&
    (* p selects in A a list of adjacent arrows from u to v *)
    (u :: unzip2 (sub p A) == rcons (unzip1 (sub p A)) v).
\end{minted}

A path $p$ from a vertex $u$ to a vertex $v$ can thus be concatenated to a path $q$ from vertex $v$ to a vertex $w$: when endpoints are obvious, we just write $p \concat q$ the resulting path from $u$ to $w$. We sometimes abuse notations and write $e \concat p$ and $p \concat e$ when one of the paths contains a single arrow $e$.

We now introduce a special case of relations on pairs of paths with same extremities in a quiver, called \emph{path relations}. A path relation is an equivalence relation induced from a congruence on the corresponding free category to the quiver. Conversely, in a small category, the composition axiom induces a path relation on the underlying quiver. The formal definition of path relations is actually independent from that of quiver. A path relation is just a family of equivalence relations on sequences of integers, indexed by pairs of integers, that are compatible with the concatenation of paths:
\begin{minted}{coq}
  Record path_relation := {
    pi_r :> forall u v : nat, relation (seq nat) ;
    pi_equiv : forall u v, equivalence _ (pi_r u v) ;
    pi_cat_stable : forall u v w p p' q q',
        pi_r u v p p' -> pi_r v w q q' -> pi_r u w (p ++ q) (p' ++ q') }.
\end{minted}

Given a formal quiver and a path relation \coqescape{r : path_relation}, \coqescape{(pi_r r u v)} is expected to be a relation on the paths from vertex \coqescape{u} to vertex \coqescape{v}. The full relation on sequences, which relates any two sequences, can be used to complete a partial collection of relations, \eg{}, the path relation induced by a certain category on its underlying quiver, avoiding this way the need for otherwise cumbersome dependent types.

\section{A two-level approach}\label{sec:dsl}

\subsection{Formulas and diagrams}\label{ssec:formdiag}

Paraphrasing Mc Lane~\cite{zbMATH01216133}, many properties of category theory can be \enquote{unified and simplified by a presentation with diagrams of arrows}. Categorical diagrammatic reasoning consists in transforming a proof of category theory into a proof about some quivers, decorated with the data of a certain category. Actually, once the appropriate quivers are drawn, the data themselves can be forgotten, but for the induced path relation, which is the only relevant information for a diagrammatic proof. In~\cite{DBLP:conf/csl/MahboubiP24}, we proposed a multi-sorted first-order language for category theory, geared towards diagrammatic reasoning: following the structure of a formula in this language constructs the quivers associated with the corresponding statement of category theory, encoded in the sorts of the variables. We recall its definition:
\begin{definition}\label{def:sig}
    We define a many-sorted signature $\Sigma$ with sorts the collection of finite acyclic quivers. Signature\/ $\Sigma$ has one function symbol $\restr_{m\colon \Q'\hookrightarrow \Q}$, of arity $\Q \to \Q'$, per each \emph{injective} quiver morphism $m\colon \Q' \hookrightarrow \Q$ between two quivers $\Q$ and $\Q'$, and one predicate symbol $\com_\Q$, on sort $\Q$, for each finite acyclic quiver $\Q$.
  \end{definition}

\begin{example}
Writing the sorts of quantified variables as a subscript of the quantifier, here is for instance a predicate of arity $\Qmap[] \times \Qmap[] \times \Qmap[]$ describing composite of arrows:
\begin{align*}
  \Comp(x,y,z)\colon\qquad& \exists_{\Qcomp}\, w,\quad \restr_{\mapicomp}(w) \eqd x \ \wedge \ \restr_{\mapiicomp}(w) \eqd y \\
  &\qquad \wedge \ \restr_{\mapiocomp}(w) \eqd z \ \wedge \ \com(w)
\end{align*}
\end{example}

\begin{example}\label{ex:mono}
  Here is a predicate of arity $\Qmap[]$ describing monomorphisms:
  \begin{align*}
    \Mono(x)\colon\qquad& \forall_{\Qmono}\, w,\quad \restr_{\mapioomono}(w) \eqd x \ \Rightarrow \ \com(\restr_{\compimono}(w))  \\
    &\qquad \Rightarrow \com(\ \restr_{\compiomono}(w)) \ \Rightarrow \ \com(\restr_{\cobimapimono}(w))
  \end{align*}
\end{example}

\begin{listing}[!t]
\begin{minted}[mathescape=false]{coq}
  Inductive term :=
      | Var of nat (* variable, named with an integer, $k denotes term (Var k) *)
      | Restr of subquiver & term. (* the 'restr' symbol *)

  Inductive formula :=
      | Forall of quiver & formula
      | Exists of quiver & formula
      | Imply of formula & formula (* Denoted with infix symbol -=> *)
      | And of formula & formula
      | FTrue (* Top atom *)
      | Commute of term (* the 'commute' predicate symbol *)
      | EqD of term & term.  (* the equality predicate symbol*)
\end{minted}
\caption{Terms and formulas}
\label{lst:formulas}
\end{listing}

Listing~\ref{lst:formulas} is the formalized counterpart of Definition~\ref{def:sig}. A term \coqescape{t : term} is thus either a variable \coqescape{Var n}, named with a natural number \coqescape{n}, or of the form \coqescape{Restr m t} for \coqescape{t} a term and \coqescape{m} a sub-quiver, seen as a morphism of quivers. Observe that the source quiver of this morphism is left undefined -- it only becomes explicit when the term is evaluated. The type \coqescape{formula} defines a first-order logic whose atoms stand for equality, commutativity or true. Quantifiers bind de Bruijn indexes and are annotated with a quiver, the sort of the bound variable. To be consistent with the theory $\Sigma$, one should only use acyclic quivers. Computing the sort of a term thus requires first annotating each of its variables with a sort. The sort of  a term of the form \coqescape{Restr m t} is then the quiver obtained by restricting the sort of term \coqescape{t} using \coqescape{m}.

Given a list \coqescape{l} of quivers, providing a sort to each of its variable, a term \coqescape{t : term} is well-formed in this context, written \coqescape{term_wf l t}, if \coqescape{l} is long enough and all the subterms of \coqescape{t} have a well-formed sort. In a closed formula, the sort of a variable is read on the corresponding quantifier. More generally, a formula \coqescape{f : formula} is well-formed in a context \coqescape{l}, written \coqescape{formula_wf l f}, if \coqescape{l} is long enough to provide a sort to each free variable in \coqescape{f} and if all the terms appearing in \coqescape{f} have a well formed sort.

Here is the corresponding formal predicate to Example~\ref{ex:mono}.
\begin{minted}[mathescape=true]{coq}
  (* We use a notation for easing the definition of quivers
  without isolated vertices from their arcs*)
  Definition monoQ : quiver := {Q [:: (0,1);(0,1);(0,2);(1,2)]}. (* This is $\raisebox{4pt}{\Qmono}$*)
  Definition mapQ : quiver := {Q [:: (0,1)]}. (* This is $\Qmap[]$ *)

  (* Lambda_arc constructs a predicate from a sequence of quivers and a formula, the first
  argument is the arity, providing sorts for the free variables of the second, in order.*)
  Definition monoF : predicate :=
    Lambda_arc [:: mapQ] (* the one-element arity sequence *)
    (Forall monoQ (
      EqD (Restr {sA [:: 3]} $0) $1
      -=> Commute (Restr {sA [:: 0 ; 2 ; 3]} $0)
      -=> Commute (Restr {sA [:: 1 ; 2 ; 3]} $0)
      -=> Commute (Restr {sA [:: 0 ; 1]} $0))) .
\end{minted}

The interpretation of a term \coqescape{f : formula} as a \Coq{} statement, in sort \coqescape{Prop}, is relative to a formal notion of \emph{diagram}, which is by definition an instance of the following structure \coqescape{diagram_type}:
\begin{minted}{coq}
  Record diagram_package (diagram : Type) := diagram_Pack {
    diagram_to_quiver : diagram -> quiver; (* underlying quiver of a diagram *)
    diagram_restr : subquiver -> diagram -> diagram; (* restriction *)
    eqD : equivalence diagram; (* setoid relation on type diagram *)
    eq_comp : diagram -> path_relation; (* path relation *) }.

  Structure diagram_type := diagram_type_Build {
    diagram_sort :> Type; (* a diagram_type coerces to its carrier type *)
    diagram_to_package :> diagram_package diagram_sort; }.
\end{minted}

A diagram is thus a term in the carrier type of an instance of structure \coqescape{diagram_type}, which can be seen as a model. A diagram commutes when the associated path relation is \emph{full}, \ie, any two paths in the underlying quiver with same source and target are related:
\begin{minted}{coq}
  Definition commute (d : diagram_type) (D : diagram) :=
    path_total (diagram_sort D) (eq_comp D).
\end{minted}

Formal diagrams can be defined from formalized categories, but not only. For instance, one can define an instance of \coqescape{diagram_type} with the following carrier type:
\begin{minted}{coq}
  Record zmod_diagram : Type := ZModDiagram {
    zmob : nat -> zmodType;
    zmmap : forall u v : nat, nat -> {additive zmob u -> zmob v};
    zmdiag_to_quiver : quiver }.
\end{minted}
where \coqescape{zmodType} is a structure for Abelian groups in the \MC{} library, and \coqescape{{{additive A -> B}}} is the type of morphisms between two Abelian groups \coqescape{A} and \coqescape{B}. We can now explain how to turn a term \coqescape{f : formula} into a \Coq{} statement, given a sequence of diagrams:
\begin{minted}{coq}
  Fixpoint formula_eval (d : diagram_type) (stack : list d) (f : formula) : Prop :=
    match f with
    | Forall Q f => forallD D :: diagram_on Q, formula_eval d (D :: stack) f
    | Exists Q f => existsD D :: diagram_on Q, formula_eval d (D :: stack) f
    | Imply f1 f2 => formula_eval d stack f1 -> formula_eval d stack f2
    | And f1 f2 => formula_eval d stack f1 /\ formula_eval d stack f2
    | FTrue => True
    | Commute t => if term_oeval stack t is Some D then commute D else False
    | EqD t1 t2 =>
      match term_oeval stack t1, term_oeval stack t2 with
      | Some DG1, Some DG2 => eqD DG1 DG2
      | _, _ => False
      end
    end.
\end{minted}
where the notations \coqescape{forallD D :: diagram_on Q, P} and \coqescape{existsD D :: diagram_on Q, P} bind variable \coqescape{D} in \coqescape{P}, so as to quantify \coqescape{P} over diagrams \coqescape{D : d} with underlying quiver \coqescape{diagram_to_quiver D} equal to \coqescape{Q}. The evaluation \coqescape{term_oeval stack t : option d} of a term \coqescape{t} in context \coqescape{stack} defaults to \coqescape{None} when the context \coqescape{stack} is too small and otherwise computes, when possible, the prescribed restriction of the diagrams before the evaluation of the atom using the relevant commutativity and equality predicates.

\subsection{Structural duality}\label{ssec:dual}

As briefly alluded to in the conclusion of our previous article~\cite{DBLP:conf/csl/MahboubiP24}, it is possible to prove a duality theorem at the meta-level of the deep-embedded first-order language. The variant presented below is updated to the current, bundled, representation of diagram types, and to a slightly different, albeit equivalent, definition of models. For any formula \coqescape{f : formula}, we define its dual formula \coqescape{formula_dual f} by structural induction, dualizing all the quivers involved in \coqescape{f}. For the sake of readability, the code uses a few notations and coercions:
\begin{minted}{coq}
    (* We fix a type for diagrams until the end of the section. *)
    Variable d : Type.

    (* A few local notations to ease reading *)
    Local Notation model := diagram_package.
    Local Notation model_dual := dual_diagram_Pack.

    (* Coercion "conflating" a model (a diagram_type) with carrier type d and its
    diagram_package. *)
    Local Coercion diagram_type_of := (@diagram_type_Build d).
\end{minted}

Any model of diagrams \coqescape{M : model d} has a dual \coqescape{model_dual M : model d}, obtained from \coqescape{M} by keeping the same data, but dualizing its quiver and reversing its path relation. We can now prove the property \coqescape{formula_eval_duality}, stated in Listing~\ref{lst:duality}, characterizing the evaluation of this dual formula in the dual of a diagram. Note that a coercion is hidden in the type of the argument \coqescape{ctx} of \coqescape{formula_eval}, to the common carrier type of \coqescape{M} and \coqescape{model_dual M}.

As a corollary, if a formula holds for all diagrams in a certain diagram type, then so does its dual. The variant \coqescape{duality_theorem_with_theory} is equally direct but slightly more interesting, as dependent type \coqescape{P} shall be used to describe a specific class of models, \eg{} models of a given theory, provided that their description is \enquote{auto-dual}, cf. Listing \ref{lst:duality}.
\begin{listing}[h!]
\begin{minted}{coq}
Theorem formula_eval_duality (M : model d) (ctx : seq d) (f : formula) :
  @formula_eval (model_dual M) ctx (formula_dual f) <-> @formula_eval M ctx f.

(* If a formula is valid in every model, then so is the dual formula *)
Corollary duality_theorem (ctx : seq d) (f : formula) :
  (forall M : model d, @formula_eval M ctx f) ->
  forall M : model d, @formula_eval M ctx (formula_dual f).

(* Relativized variant to a specific class P of models *)
Corollary duality_theorem_with_theory (ctx : seq d) (f : formula) (P : model d -> Prop) :
  (forall M, P M -> P (model_dual M)) -> (forall M, P M -> @formula_eval M ctx f)
  -> forall M, P M -> @formula_eval M ctx (formula_dual f).
\end{minted}
\caption{Duality theorems}\label{lst:duality}
\end{listing}

\subsection{Proofs}

The deep-embedded level also features a data-structure \coqescape{valid_proof} for (deep-embedded) proofs of deep-embedded formulas, and implements a checker \coqescape{check_proof} for these proofs. A correctness theorem ensures that for any well-formed deep-embedded formula \coqescape{f : formula}, a positive answer of the proof checker entails the provability of the interpretation of \coqescape{f} in any model \coqescape{d}:
\begin{minted}{coq}
  Theorem check_proof_valid (d : diagram_type) (f : formula) (pf : valid_proof) :
    formula_wf [::] f -> check_proof f pf = true -> formula_eval d [::] f.
\end{minted}

The deep-embedded level also features a data-structure \coqescape{sequent}, used to reify a proof in progress, and a type \coqescape{tactic} for actions making progress in a proof:
\begin{minted}[autogobble]{coq}
  Inductive sequent := sequent_Build {
    context : seq quiver;
    premises : seq formula;
    goal : formula; }.

  Definition tactic := sequent -> option sequent.
\end{minted}
Note that there is only one goal attached to a sequent, as disjunction is not part of our formal language. To a sequent with context \coq{[:: Q_1 ; ... ; Q_n ]}, premises \coq{[:: H_1 ; ... ; H_m ]} and goal \coq{G}, one can associate the following term in type \coqescape{formula}:
\begin{minted}[autogobble]{coq}
  Forall Q_1 ( ... ( Forall Q_n (
    Imply   (And H_1 ( ... (And H_m Ftrue) ... ))   G )) ... )
\end{minted}
The sequent is well-formed if the corresponding formula is a well-formed formula. In the other direction, to any \coqescape{f : formula}, one can associate the sequent with goal \coqescape{f} and with empty context and no premise. A tactic $\tau$ is \emph{valid} when for any well-formed sequent $s$, if $\tau\,s$ is some $s'$, then $s'$ is also well-formed and the evaluation of $s'$ implies the evaluation of $s$. In this context, a \emph{valid proof} is just a list of valid tactics.

To check that a valid proof effectively provides a proof of some given formula $f$, one can perform the following steps. First, compute the sequent associated to the formula. Then, for each tactic in the proof, apply the tactic. If at some point the tactic returns \coq{None} the proof is not correct. Otherwise, check that the goal of the final sequent is \coq{FTrue}. If this is the case, then the evaluation of $f$ is true. In the implementation, the verification is performed by the function \coq{check_proof}, and the conclusion is proven in Theorem \coq{check_proof_valid}.

The next step is to implement a set of \emph{useful} tactics, and to prove that they are valid. This has been done for basic tactics like introduction and elimination rules, or for more involved tactics like the \mintinline{c}`Rewrite` tactic, or the \coq{Comauto} tactic, for automating the proof of commutative atoms. Starting from the rules of the proof system, more complex tactics combine existing ones in a relevant way, so as to considerably reduce the size of the proofs. In order to be valid, some tactics may also require more assumptions from the model (the type of diagrams) used to evaluate the formulas.

Here is for instance the statement of the formula corresponding to Lemma~\ref{lem:monomonom}, using an infix notation \coqescape{-=>} for the \coqescape{Imply} constructor of type \coqescape{formula}:
\begin{minted}{coq}
  (* when the quiver of a formula have no isolated vertex, formula_fill_vertices
  allows for a shorter description of quivers, only by their arcs. *)
  Definition mono_monomPF : formula :=
  formula_fill_vertices [::] (
  Forall compQ (
    Commute $0 -=> monoF App (Restr {sA [:: 1]} $0) -=> monoF App (Restr {sA [:: 0]} $0))).
\end{minted}

The reified proof of this statement is currently done by applying successively twelve tactics.\footnote{See \texttt{mono\_monom\_pf} in file \texttt{diagram\_chasing/mono\_monom.v}}

In fact, duality arguments are implemented by instrumenting proofs so as to check that they are amenable to duality arguments, which is more convenient in practice than the structural argument described in Section~\ref{ssec:dual}. We thus gather a tactic $\tau$ and its dual tactic $\tau^*$ such that for any sequent $s$, the dual of $\tau s$ is equal to $\tau^*$ applied to the dual of~$s$. Very often, a tactic and its dual are just identical. A second theorem \coqescape{duality_theorem}, not to be confused with that of Section~\ref{ssec:dual}, ensures that, indeed, if every tactic of a proof of some formula has such a dual tactic, then the proof obtained by taking the dual tactics will be a proof of the dual formula.\footnote{See file \texttt{diagram\_chasing/FanL.v}}

\begin{minted}{coq}
  (* Biproofs are pairs of proofs of same size *)
  Structure biproof := biproof_Build {
    biproof_primal : proof;
    biproof_dual : proof;
    biproof_eq_size : size biproof_primal == size biproof_dual; }.

  Theorem duality_theorem f (bpf : biproof) :
  (* assuming that the tactics in the pair of proofs bpf are pairwise dual *)
    biproofD bpf ->
  (* then the primal proof proves a formula iff the dual proof proves its dual *)
    check_proof (formula_dual f) (biproof_dual bpf) =
    check_proof f (biproof_primal bpf).
\end{minted}

All the tactics currently implemented have a dual tactic (though for some of them, the proof of their validity has not yet been written). A duality argument can be used to prove the dual statement to Lemma~\ref{lem:monomonom}, and we provide the corresponding deep-embedded formula.\footnote{See file \texttt{tests\_and\_examples/mono\_monom.v}}

\begin{lemma}\label{lem:epimepi}
  Let $\Cat$ be a category. For any morphism $f$ and $g$ such that $g \circ f \in Hom(\Cat)$, if $g \circ f$ is an epimorphism, then so is $g$.
\end{lemma}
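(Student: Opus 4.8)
The plan is to obtain Lemma~\ref{lem:epimepi} from Lemma~\ref{lem:monomonom} \emph{for free}, via the duality machinery of Section~\ref{ssec:dual} and the \texttt{biproof} mechanism, rather than replaying a diagram chase by hand. The key observation is that the two statements are dual to one another: reversing every arrow of Figure~\ref{fig:monomonom} turns the composable pair $A \xrightarrow{f} B \xrightarrow{g} C$ into $C \xrightarrow{g^\dual} B \xrightarrow{f^\dual} A$, sends \enquote{$f$ is a monomorphism} to \enquote{$f^\dual$ is an epimorphism}, and maps the path $g \circ f$ to $(g \circ f)^\dual = f^\dual \circ g^\dual$; after renaming $f^\dual$ and $g^\dual$ to fresh names this reads exactly \enquote{if $g \circ f$ is an epimorphism then $g$ is an epimorphism}, which is Lemma~\ref{lem:epimepi}.

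Concretely, I would: (i) start from the reified formula \texttt{mono\_monomPF} and its twelve-tactic reified proof from Section~\ref{ssec:dual}; (ii) pair that proof with the list of its dual tactics --- for most of the twelve a tactic is its own dual, so this reduces to checking the few non-symmetric steps --- so as to build a \texttt{biproof} satisfying the \texttt{biproofD} hypothesis; (iii) apply the second \texttt{duality\_theorem} (the \texttt{biproof} one) to conclude that \texttt{check\_proof} accepts \texttt{formula\_dual mono\_monomPF} because it accepts \texttt{mono\_monomPF}; (iv) invoke \texttt{check\_proof\_valid} to get \texttt{formula\_eval d [::] (formula\_dual mono\_monomPF)} for an arbitrary diagram type \texttt{d}; and (v) instantiate \texttt{d} with the diagram type associated with the category $\Cat$ and unfold \texttt{formula\_eval} to read off the concrete statement of the lemma.

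The main obstacle is the bookkeeping in step (v): one must verify that \texttt{formula\_dual mono\_monomPF}, obtained by mechanically dualizing every quiver in \texttt{mono\_monomPF} --- in particular those inside the auxiliary predicate \texttt{monoF} of Example~\ref{ex:mono} and inside the composite-arrows predicate $\Comp$ of Section~\ref{ssec:formdiag} --- coincides, after the evident de Bruijn reindexing, with the formula one would write directly for Lemma~\ref{lem:epimepi}, and that the evaluation of \texttt{formula\_dual monoF} on a category-induced diagram really does express \enquote{this arrow is an epimorphism}. This amounts to computing the duals of the relevant \texttt{Restr} sub-quivers and matching arrow indices; no new mathematics is involved, and since \texttt{check\_proof\_valid} already ranges over all models there is no auto-duality side condition to discharge (one only needs the twelve dual tactics to be known valid, which holds for the basic tactics used here). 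A fully manual route --- replaying the proof of Lemma~\ref{lem:monomonom} \emph{mutatis mutandis} on the arrow-reversed diagram --- also works, but sidesteps the duality infrastructure the section is meant to exercise.
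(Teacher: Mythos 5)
Your proposal is correct and matches the paper's proof exactly: the paper also derives Lemma~\ref{lem:epimepi} from Lemma~\ref{lem:monomonom} by duality, and the reified proof is precisely a single tactic that dualizes the twelve-tactic reified proof of the monomorphism statement via the \texttt{biproof} mechanism and the second \texttt{duality\_theorem}, as you describe. Your remark that the remaining work is bookkeeping --- checking that \texttt{formula\_dual mono\_monomPF} coincides up to de Bruijn reindexing with the directly-written \texttt{epi\_mepiPF} over the dualized quivers \texttt{mapQD}, \texttt{monoQD}, \texttt{compQD} --- is also exactly the point the paper makes after the lemma.
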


\begin{proof}
From Lemma~\ref{lem:monomonom}, by duality.
\end{proof}

\begin{minted}{coq}
  Definition epiF :=
    Lambda_arc [:: mapQD]
    (Forall monoQD (
      EqD (Restr {sA [:: 3]} $0) $1
      -=> Commute (Restr {sA [:: 0 ; 2 ; 3]} $0)
      -=> Commute (Restr {sA [:: 1 ; 2 ; 3]} $0)
      -=> Commute (Restr {sA [:: 0 ; 1]} $0))).

  Definition epi_mepiPF : formula :=
  formula_fill_vertices [::] (
  Forall compQD (
    Commute $0 -=> epiF App (Restr {sA [:: 1]} $0) -=> epiF App (Restr {sA [:: 0]} $0))).
\end{minted}
which is similar to term \coqescape{monoPF} and formula \coqescape{mono_monomPF} except that quivers \coqescape{mapQ}, \coqescape{monoQ} and  \coqescape{compQ} have been dualized, respectively into \coqescape{mapQD}, \coqescape{monoQD} and \coqescape{compQD}. The reified proof takes a single tactic, which dualizes the proof of the statement on monomorphisms.

\section{Automating commutativity proofs}\label{sec:autocom}

In order to apply most lemmas obtained by diagram chasing, one first has to prove that a certain diagram is commutative. For instance, the so-called \emph{five lemma}, which allows to prove that some map is an isomorphism, requires to have a commutative diagram over the quiver of Figure \ref{fig:five-lemma}. Yet proving that such a diagram is commutative by checking one equality per bipaths in this diagram would be excessively tedious. Commutativity of a larger diagram is typically obtained from the commutativity of certain sub-diagrams, and the proof of this implication is often little detailed, or not at all. For instance, in the case of Figure \ref{fig:five-lemma}, it actually suffices to check four equalities, one for each sub-square of the quiver. More precisely, each pair of paths going from the top-left corner to the bottom-right corner must correspond to equal morphisms. Once these four equalities has been proven, we infer that each square commutes.\footnote{See file \texttt{diagram\_chasing/Bipath.v} for a formalized proof.}
\begin{figure}[h]
  \begin{center}
    \begin{tikzcd}[column sep=scriptsize]
      A \rar\dar & B \rar\dar & C \rar\dar & D \rar\dar & E \dar \\
      A'    \rar & B'    \rar & C'    \rar & D'    \rar & E'
    \end{tikzcd}
  \end{center}
  \caption{The five-lemma diagram. \label{fig:five-lemma}}
\end{figure}

In section~\ref{ssec:commerge}, we describe an algorithm for deciding the commerge problem for diagrams with acyclic underlying quivers, so as for instance to automate the proof that the diagram of Figure \ref{fig:five-lemma} commutes as soon as the aforementioned four bipaths commute. In section~\ref{ssec:comcut}, we describe a heuristic for discovering a collection of sub-diagrams whose commutativity entails that of a larger one.

\subsection{Decision procedure for the commerge problem}\label{ssec:commerge}

In~\cite{DBLP:conf/csl/MahboubiP24}, we provided a pen-and-paper proof of the decidability of the commerge problem for diagrams with acyclic underlying quiver, and the undecidability of its generalization to possibly cyclic underlying quivers. In this section, we describe the more practical algorithm implemented by the tactic \coqescape{Comauto}. In particular, this implementation comes with a formal proof of correctness, which is the main ingredient in the validity proof of the corresponding tactic.

The algorithm operates on an acyclic quiver $\Q$ and a finite collection $\Q_1, \dots, \Q_k$ of subquivers of $\Q$, representing the commutativity assumptions. Denote $l$ the union $\bigcup_{i}\BP_{\Q_i}$, slightly abusing notations by denoting $\Q_i$ the quiver induced by the corresponding subquiver. The algortithm checks whether the smallest path relation $\cl_\Q(l)$ induced by $l$ is full, that is $\cl_\Q(l) = \BP_\Q$. Without loss of generality, we can assume that the acyclic quiver $\Q$ is topologically sorted in reverse order, that is, that any path in $\Q$ follows a sequence of vertices with decreasing labels. The implementation actually performs a topological sort of the quiver~\footnote{See \texttt{diagram\_chasing/TopologicalSort.v}} and updates the representation of $\cl_\Q(l)$ accordingly.

For any given vertices $u$ and $v$ in $V_\Q$, we introduce $\arcsfrom{u,v} \subset A_\Q$ the set of adjacent arrows of $u$ starting a path to $v$ in $\Q$ and $\arcsto{v,u} \subset A_\Q$ that of adjacent arrows of $v$ ending a path from $u$ in $\Q$:
\begin{align*}
\arcsfrom{u,v} & \triangleq \{ e \in A_\Q\ |\ \exists p,\ e\concat p \textrm{ is a path from } u \textrm{ to } v \} \\
\arcsto{v,u} & \triangleq \{ e \in A_\Q\ |\ \exists p,\ p\concat e \textrm{ is a path to } v \textrm{ from } u \}
\end{align*}
We define $\G_{u,v}$ the multigraph whose vertices are the elements of $\E_{u,v} \triangleq \arcsfrom{u,v} \cup \arcsto{v,u}$. An arc of $\G_{u,v}$ relates vertices $e_1, e_2 \in \E_{u,v}$ when:
\begin{itemize}
  \item either there is a path in $\Q$ from $u$ to $v$ containing both $e_1$ and $e_2$;
  \item there exists $i$ and $p_1, p_2$ paths in $\Q_i$ from $u$ to $v$ such that $e_1$ (resp. $e_2$) appears in $p_1$ (resp. $p_2$).
\end{itemize}
We moreover introduce the binary relation $R_{u,v}$, on the elements of $\E$: for any $e_1, e_2 \in \E$, $R_{u,v}(e_1, e_2)$ holds if and only if any path $p_1$ from $u$ to $v$ containing $e_1$ is related by $\cl_\Q(l)$ to any  any path $p_2$ from $u$ to $v$ containing $e_2$. If $R_{u,v}$ is full, then $\cl_\Q(l)$ contains all the pairs of paths from $u$ to $v$.
\begin{lemma}\label{lem:commerge}
  For any vertices $u,v \in V_\Q$, if\/ $\G_{u,v}$ is connected, then $R_{u,v}$ is full.
\end{lemma}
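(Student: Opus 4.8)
The plan is to obtain this statement as the inductive step of the correctness argument for \coqescape{Comauto}: proceeding by induction on $\ell(u,v)$, the maximal length of a directed path from $u$ to $v$ in $\Q$ (finite, since $\Q$ is finite and acyclic), and using that the decision procedure certifies connectedness of $\G_{u',v'}$ for every pair $(u',v')$, the induction hypothesis gives that $R_{u',v'}$ is full --- equivalently, that all paths between $u'$ and $v'$ are $\cl_\Q(l)$-related --- for all $(u',v')$ with $\ell(u',v')<\ell(u,v)$. If $u$ and $v$ are joined by no nontrivial path then $\E_{u,v}=\emptyset$ and $R_{u,v}$ is vacuously full, so assume $\ell(u,v)\geq 1$.

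First I would show that, under this induction hypothesis, all paths from $u$ to $v$ through a given arrow $e\in\E_{u,v}$ are pairwise $\cl_\Q(l)$-related; write $\kappa(e)$ for their common $\cl_\Q(l)$-class. Indeed, writing $e=(x,y)$, any path from $u$ to $v$ through $e$ factors uniquely as $p\concat e\concat q$ with $p$ a path from $u$ to $x$ and $q$ a path from $y$ to $v$; since $e\in\E_{u,v}=\arcsfrom{u,v}\cup\arcsto{v,u}$, either $x=u$ or $y=v$, so one of $p,q$ is forced to be trivial and the other ranges over \emph{all} paths of a pair of vertices whose $\ell$-value is strictly below $\ell(u,v)$ (and if $e$ is a direct arrow $u\to v$, then $e$ itself is the only such path, with nothing to prove). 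By the induction hypothesis those paths are pairwise $\cl_\Q(l)$-related, so concatenation-stability of the path relation $\cl_\Q(l)$ yields the claim; since $e\in\E_{u,v}$ lies on at least one path from $u$ to $v$, the class $\kappa(e)$ is well defined.

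It then remains to transfer connectedness of $\G_{u,v}$ to fullness of $R_{u,v}$. Since every path from $u$ to $v$ through $e$ lies in $\kappa(e)$ and $e$ lies on some such path, one gets $R_{u,v}(e_1,e_2)\iff\kappa(e_1)=\kappa(e_2)$; in particular $R_{u,v}$ is an equivalence relation on $\E_{u,v}$. Moreover every arc of $\G_{u,v}$ already lies in $R_{u,v}$: an arc of the first kind exhibits a path from $u$ to $v$ through both $e_1$ and $e_2$, which therefore lies in $\kappa(e_1)\cap\kappa(e_2)$, so the two classes coincide; an arc of the second kind exhibits paths $p_1$ through $e_1$ and $p_2$ through $e_2$, both from $u$ to $v$ inside some $\Q_i$, whence $p_1\in\kappa(e_1)$, $p_2\in\kappa(e_2)$ and $(p_1,p_2)\in\BP_{\Q_i}\subseteq l\subseteq\cl_\Q(l)$, so again $\kappa(e_1)=\kappa(e_2)$. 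Thus the equivalence relation generated by the arcs of $\G_{u,v}$ is contained in $R_{u,v}$; if $\G_{u,v}$ is connected, this generated relation is the total relation on $\E_{u,v}$, and hence $R_{u,v}$ is full.

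The main obstacle is the first step: the class $\kappa(e)$ of a path through $e$ only becomes meaningful once commutativity has been established for the strictly shorter pairs lying \enquote{inside} $(u,v)$, so the statement has to be used within the induction on $\ell$ carried out by the decision procedure rather than in isolation. Once this is set up, what is left --- matching paths represented as sequences of arrow indices under concatenation, and turning graph connectivity into a full equivalence closure --- is routine bookkeeping, which is precisely what the accompanying Coq correctness proof of \coqescape{Comauto} discharges.
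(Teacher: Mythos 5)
Your proof is correct and takes essentially the same route as the paper's: an outer induction (the paper uses label difference after reverse topological sorting where you use longest-path distance, equivalent measures), the same key observation that the induction hypothesis makes all paths from $u$ to $v$ through a fixed $e\in\E_{u,v}$ pairwise $\cl_\Q(l)$-related, and a transfer of $\G_{u,v}$-connectivity to fullness of $R_{u,v}$ via transitivity. You package the last step as \enquote{$\kappa$ is a well-defined class map and every arc of $\G_{u,v}$ preserves it}, whereas the paper runs a second inner induction on the length of a connecting walk in $\G_{u,v}$, but these are the same argument.
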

The decision procedure constructs the multigraphs $\G_{u,v}$ for any pair of vertices, and checks that they are all connected. If so, then $\cl_\Q(l)$ contains $\BP_\Q$.

\begin{proof}
  We prove Lemma~\ref{lem:commerge} for vertices respectively labelled $u + n$ and $u$, for a vertex $u \in V_\Q$, by induction on $n$. When $n = 0$, then the result holds because paths are empty.

  We now assume that the result holds for any $u$ and any $k < n$, and that $\G_{u+n,u}$ is connected. Observe first that as a consequence of the induction hypothesis, and because the graph is  topologically sorted in reverse order, any two paths from $u + n$ to $u$ sharing their initial or their final arrow are in $\cl_\Q(l)$. We now fix $u\in V_\Q$ and $e_1, e_2 \in \E_{u+n,u}$ and we need to prove that $R_{u+n,u}(e_1, e_2)$. We proceed by induction on the length of a path in $\G_{u+n,u}$ relating $e_1$ and $e_2$. If this path is empty, then $e_1 = e_2$ is either an element of $\arcsfrom{u,v}$ or of $\arcsto{v,u}$ and we can conclude using the initial observation. We now suppose that there is an arc $e\in \E_{u+n,u}$ and a path $t$ in $\G_{u+n,u}$ such that $(e_1, e)$ is an edge of $\G_{u+n,u}$ and $t$ is a path from $e$ to $e_2$ in $\G_{u+n,u}$. Let $p_1$ (resp. $p_2$) be a path in $\Q$ containing $e_1$  (resp. $e_2$). If there is a path $p$ in $\Q$ from $u$ to $v$ containing both $e$ and $e_1$ then $p$ is related to $p_1$, by the observation, as the two paths share either their initial or their final arrow. But $p$ is also related to $p_2$ by the (second) induction hypothesis as $p$ contains $e$ and $p_2$ contains $e_2$. The conclusion follows by transitivity of the path relation. Now suppose that $e$ and $e_1$ respectively belong to $q$ and $q_1$, paths from $u+n$ to $u$ in some $\Q_i$. Paths $q_1$ and $q$ are related, by definition of $\cl_\Q(l)$. But $q$ and $p_2$ are also related by the (second) induction hypothesis, as they respectively contain $e$ and $e_2$. The conclusion follows by transitivity.
\end{proof}

\subsection{Finding sufficient commutativity conditions}\label{ssec:comcut}

In fact, one can even use the computer to guess a sufficient list of equalities entailing that a certain diagram commutes, instead of providing it explicitly by hand. In this section, we explain how to do so in practice. The corresponding algorithm has been implemented under the name \emph{comcut}. Although not strictly needed for the purpose of a tactic, its correctness has been proven formally.\footnote{See the folder \texttt{comcut}}.

Let $\Q$ be a quiver with vertices $V=[n]$, for some $n\in\N$, and arcs $A_\Q$. Moreover, we assume that $\Q$ is topologically sorted in a reversed order, \ie, if an arc goes from $u$ to $v$, then $u > v$. From such an input, \emph{comcut} returns a list of bipaths $l \subseteq \BP_\Q$ such that $\cl_\Q(l)$, that is the smallest path relation in $\Q$ containing $l$, is equal to $\BP_\Q$.

The algorithm works by induction on the size of $\Q$. Let $u_0 := n-1$ be the top vertex. If there is no arc whose source is $u_0$, then we just have to apply \emph{comcut} to $\Q$ deprived from the vertex $u_0$ (note that $u_0$ cannot be a target). Otherwise, let $a_0 \in A$ be an arc with source $u_0$. Denote by $v_0$ the target of $a_0$. Consider the quiver $\Q'$ obtain from $\Q$ by removing the arc $a_0$. Applying the algorithm to $\Q'$, we get a list $l' \subset \BP_{\Q'}$ such that $\cl_{\Q'}(l')=\BP_{\Q'}$. We complete this list to a list $l$ as follows. Let $\~W := \acc(u_0)\cap\acc(v_0)$ be the set of vertices accessible both from $u_0$ and from $v_0$ in $\Q'$. Set
\[ W := \{ w \in \~W \mid \forall w'\in\~W\setminus\{w\}, w \notin \acc(w') \}. \]
For each $w \in W$, select a path $p_w$ from $u_0$ to $w$ in $\Q'$ and a path $q_w$ from $v_0$ to $w$ in $\Q'$. Set $l := l' \cup \{ (p_w, a_0 \concat q_w) \mid w \in W \}$.

\begin{proposition}
  The list $l$ constructed above verifies $\cl_\Q(l) = \BP_\Q$.
\end{proposition}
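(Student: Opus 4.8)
The proposition is proved by following the recursive structure of \emph{comcut}, i.e.\ by strong induction on $\card{V_\Q}$ (equivalently on $n$). The inclusion $\cl_\Q(l)\subseteq\BP_\Q$ is immediate, since any path relation on $\Q$ relates only pairs of paths with equal extremities; the whole content is the reverse inclusion $\BP_\Q\subseteq\cl_\Q(l)$. So throughout I fix a bipath $(p,p')$ of $\Q$, both paths going from some vertex $a$ to some vertex $b$, and aim to show $(p,p')\in\cl_\Q(l)$. I will also use freely that $\cl_{\Q'}(l')\subseteq\cl_\Q(l)$ whenever $l'\subseteq l$ and $\BP_{\Q'}\subseteq\BP_\Q$, which holds because the right-hand side, seen as a relation on paths of $\Q'$, is a path relation on $\Q'$ containing $l'$.

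If $u_0=n-1$ is the source of no arc, then, being the top vertex, it is also the target of none (an incoming arc would have a larger source), hence isolated; so no path of $\Q$ meets $u_0$, and $\BP_\Q=\BP_{\Q''}$, $\cl_\Q=\cl_{\Q''}$ where $\Q''$ is $\Q$ with $u_0$ deleted. The conclusion then follows from the induction hypothesis applied to $\Q''$. In the main case we fix $a_0\colon u_0\to v_0$ and put $\Q'\coloneqq\Q\setminus a_0$, with $l'$ the recursively computed list, so that $\cl_{\Q'}(l')=\BP_{\Q'}$ by induction. The structural observation that drives the rest is that, since $\Q$ is topologically sorted in reverse and $u_0$ is maximal, no arc enters $u_0$; hence any path of $\Q$ that uses $a_0$ starts at $u_0$, has $a_0$ as its first arc, and (by acyclicity, which the reverse sort guarantees) uses it exactly once. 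Consequently every path of $\Q$ is either already a path of $\Q'$, or of the form $a_0\concat r$ with $r$ a path of $\Q'$ from $v_0$ to its endpoint.

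Now split according to whether $p,p'$ use $a_0$. If neither does, both are paths of $\Q'$, so $(p,p')\in\BP_{\Q'}\subseteq\cl_\Q(l)$. If both do, write $p=a_0\concat r$, $p'=a_0\concat r'$ with $r,r'$ paths of $\Q'$ from $v_0$ to $b$; then $(r,r')\in\BP_{\Q'}=\cl_{\Q'}(l')$, and prepending $a_0$ via \coqescape{pi_cat_stable} (using reflexivity of $\cl_\Q(l)$ at $a_0$) gives $(p,p')\in\cl_\Q(l)$. The remaining, and principal, case is $p=a_0\concat r$ with $r\colon v_0\to b$ in $\Q'$, while $p'$ is a path of $\Q'$ from $u_0$ to $b$. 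Then $b\in\acc_{\Q'}(u_0)\cap\acc_{\Q'}(v_0)=\~W$. Ordering $\~W$ by reachability in $\Q'$ — a partial order by acyclicity — the set $W$ is exactly its set of maximal elements, so there is $w\in W$ with $b\in\acc_{\Q'}(w)$; pick a path $s\colon w\to b$ in $\Q'$, and set $t\coloneqq a_0\concat q_w\concat s$, a path of $\Q$ from $u_0$ to $b$, using the paths $p_w\colon u_0\to w$, $q_w\colon v_0\to w$ of $\Q'$ chosen by the algorithm. On one side, $(q_w\concat s, r)$ are two paths of $\Q'$ from $v_0$ to $b$, hence related by $\cl_{\Q'}(l')$, and prepending $a_0$ gives $(t,p)\in\cl_\Q(l)$. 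On the other side, $(p_w, a_0\concat q_w)\in l$; appending $s$ gives $(p_w\concat s, t)\in\cl_\Q(l)$, while $(p', p_w\concat s)$ are two paths of $\Q'$ from $u_0$ to $b$, hence related by $\cl_{\Q'}(l')\subseteq\cl_\Q(l)$; so $(p',t)\in\cl_\Q(l)$. Transitivity and symmetry of the path relation give $(p,p')\in\cl_\Q(l)$, which finishes the induction.

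The only genuinely delicate point is this last chaining, together with the facts that the reachability preorder on $\~W$ is antisymmetric (so that every element, in particular $b$, lies below some element of $W$, guaranteeing the path $s$), and that all the auxiliary paths $r$, $r'$, $s$, $p_w$, $q_w$ really live in $\Q'$ rather than $\Q$, so that fullness of $\cl_{\Q'}(l')$ applies to them; everything else is bookkeeping with \coqescape{pi_cat_stable}, reflexivity, symmetry and transitivity, plus the two observations that deleting the isolated top vertex, resp.\ removing the arc $a_0$, modify the relevant sets of paths and relations exactly as needed.
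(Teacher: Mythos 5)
Your proof is correct and follows essentially the same argument as the paper's. The key decomposition is identical: split on whether $a_0$ appears in the two paths, reduce to the asymmetric case, locate $w\in W$ above (or below, depending on the ordering convention) the common endpoint, and chain three relations through the auxiliary path $p_w\concat s$ (the paper's $p_w\concat r$) and $a_0\concat q_w\concat s$ using the new generator $(p_w,a_0\concat q_w)\in l$ together with fullness of $\cl_{\Q'}(l')$. You are somewhat more explicit than the paper about the induction bookkeeping (the base case where $u_0$ is isolated, the reduction $\cl_{\Q'}(l')\subseteq\cl_\Q(l)$, and the ``both use $a_0$'' case, which the paper dispatches in half a sentence), but these are expository differences, not a different route.
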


Before proving the proposition, let us quickly explain how to get an effective implementation from the above description. To be able to compute accessibility and to reconstruct the different paths, one computes a square matrix indexed by vertices whose $(u,v)$ entry is either empty if there is no nontrivial path from $u$ to $v$, or contains an arc $a$ such that there is a path from $u$ to $v$ which starts by $a$. To update such a matrix for the quiver $\Q'$ into a matrix corresponding to the quiver $\Q$, it suffices to set $a_0$ to all the entries of the form $(u_0,v)$ for $v \in \acc(v_0)$. The rest of the algorithm is easy to write down.

\begin{proof}
  Let $p, q$ be paths from $u$ to $v$ in $\Q$. Note that the arc $a$ can only appear as the first element of $p$ and $q$. If $a$ does not appear in $p$ nor in $q$, or if it appears in both, then $(p,q)$ was already in $\BP_\Q(l')$. Hence, up to symmetry, it remains the case where $u=u_0$, $p$ belongs to $\Q'$ and $q=a_0 \concat q'$ with $q'$ a path of $\Q'$. By definition of $\~W$, $v \in \~W$. Let $w \in W$ such that $v$ is accessible from $w$. Let $r$ be a path from $w$ to $v$ (cf. Figure \ref{fig:two_bifurcations}).
  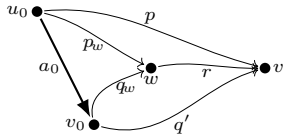
\begin{figure}[h]
    \begin{center}
    \begin{tikzpicture}[rotate=90, xscale=-1, scale=1.5]
      \scriptsize

      \begin{scope}
        \draw (0,0)   node[inner sep = .5pt] (a1) {\rawvertex{}} node[left] {$u_0$};
        \draw (1,-.5) node[inner sep = .5pt] (a2) {\rawvertex{}} node[left] {$v_0$};
        \draw (.5,-1) node[inner sep = .5pt] (w)  {\rawvertex{}} node[below]   {$w$};
        \draw (.5,-2) node[inner sep = .5pt] (v)  {\rawvertex{}} node[right]  {$v$};
      \end{scope}

      \begin{scope}[ultra thin, ->, inner sep=1.5pt]
        \draw[thick, -Latex] (a1) -- (a2) node[midway, left] {$a_0$};
        \draw (a1) to[out=-100, in=100] node[midway, above]    {$p$}   (v);
        \draw (a1) to[out=-85, in=110]  node[midway, below]   {$p_{\!w}$} (w);
        \draw (a2) to[out=-80, in=70]   node[midway, below]   {$q'$}  (v);
        \draw (a2) to[out=-180, in=60]  node[near end, below]   {$q_{\!w}$} (w);
        \draw (w) to[out=-110, in=80]   node[midway, below]   {$r$}   (v);
      \end{scope}
    \end{tikzpicture}
  \end{center}
    \caption{Decomposition of the relation between two paths thanks to an element $w \in W$. \label{fig:two_bifurcations}}
  \end{figure}
  Then, $(p, p_w \concat r)$ and $(q_w \concat r, q')$ both belong to $\BP_{\Q'}$, and $(p_w, a_0 \concat q_w)$ belongs to $l$. Hence we get the following sequence of relations in $\cl_\Q(l')$.
  \[ p \ \sim \ p_w \concat r \ \sim \ a_0 \concat q_w \concat r \ \sim \ a_0 \concat q' = q, \]
  which proves the proposition.
\end{proof}

\begin{remark}
Note that it may happen that $l$ is not minimal, see Figure \ref{fig:counter-example_comcut} for a counterexample.
\begin{figure}[h]
  \begin{center}
    \begin{tikzpicture}[rotate=90, xscale=-1, scale=1.5]
      \scriptsize

      \node (3) at (0,0)    {$3$};
      \node (4) at (-30:-1) {$4$};
      \node (0) at (30:1)   {$0$};
      \node (1) at (80:1.5) {$1$};
      \node (2) at (-80:.8) {$2$};

      \foreach \i/\j in {4/3,4/2,4/1,4/0,3/2,3/1,3/0,2/0,1/0} {
        \draw[->] (\i) -- (\j);
      }
    \end{tikzpicture}
  \end{center}
  \caption{Counterexample to the minimality of the comcut algorithm. Before adding the arc $(4,3)$, we need four equalities to ensure the commutativity. Adding $(4,3)$ forces to add two more relations, but one of the previous relations becomes useless. \label{fig:counter-example_comcut}}
\end{figure}
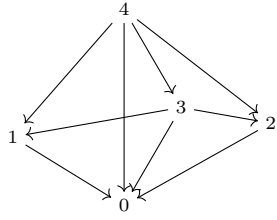
\end{remark}

\section{Conclusion}\label{sec:concl}

We have described a first step towards the implementation of a generic library for writing reliable categorical diagrammatic proofs, available online~\cite{repo}. Such a library can serve two purposes. The first one is to assist mathematician authors in writing reliable proofs, the other is to provide the mandatory infrastructure for expanding the existing corpus of formalized category theory, but also of formalized algebraic topology, and homological algebra. As expressed by the \textsc{Mathlib} community~\cite{mathlibzulip}, the lack for such a tool is major showstopper for the latter. However, the current state of the present library arguably only provides a low level language for categorical statements and the next steps should enrich the collection of formula combinators, \eg{} for limits, pullbacks, etc. as well as the gallery of diagram models.

Independence from any library of category theory is achieved by hosting a dedicated proof system inside that of a proof assistant, \Coq in this case, following the classic formalization technique of \emph{deep-embedding}~\cite{DBLP:conf/tpcd/BoultonGGHHT92}. Dependent types allow to formalize a structural duality property for this language. We are not aware of any comparable formal-proof-producing automation tactics for proving the commutativity of diagrams, nor for performing duality arguments. However, some existing libraries of formalized category theory, notably \textsf{Mathlib}, for the \textsf{Lean} proof assistant, and \textsf{Unimath}, a \Coq{} library for univalent mathematics, provide some tools to ease proofs by diagram chasing, either with brute-force rewrite-based tactics, or with a graphical editor for generating proof scripts~\cite{lafont}. Other existing libraries of formalized category theory, including Jacobs and Timany's~\cite{DBLP:conf/rta/Timany016}, do not include any specific support for diagrammatic proofs. We refer the interested reader to the later article for a survey of existing libraries of formalized category theory, which remains quite relevant for the purpose of this discussion, to the notable exception of the more recent \textsf{Mathlib} chapter on category theory. The later serves as a basis for Himmel's formalization of abelian categories in \textsf{Lean}~\cite{himmel}, including proofs of the five lemma and of the snake lemma, and proof (semi-)automation tied to this specific formalization. Duality arguments are not addressed. Also in the \textsf{Mathlib} ecosystem, Monbru~\cite{monbru} also discusses automation issues in diagram chases, and provides heuristics for generating them automatically, albeit expressed in a pseudo-language.

Other computer-aided tools exist for diagrammatic categorical reasoning, with a specific emphasis on the graphical interface. Notably, the accomplished \textsf{Globular} proof assistant~\cite{DBLP:journals/lmcs/BarKV18} stems from similar concerns about the reliability of diagrammatic reasoning, but for higher category theory. It is  geared towards visualization rather than formal verification and implements various algorithms for constructing and comparing diagrams in higher categories. Barras and Chabassier have designed a graphical interface for diagrammatic proofs which also provides a graphical interface for generating \Coq{} proof scripts of string diagrams, and visualizing \Coq{} goals as diagrams~\cite{barras-chabassier}. But up to our knowledge, this tool does not include any specific automation.

\bibliography{biblio}
\end{document}